\numberwithin{equation}{section}
\definecolor{DarkRed}{rgb}{.7,0,.4}
\def\ci{\cite}
\def\cp{\citep}
\def\mc{\mathcal}
\def\bco{\iffalse}
\newcommand{\bea}{\begin{eqnarray*}}
\newcommand{\eea}{\end{eqnarray*}}
\newcommand{\ed}{\end{document}}
\newcommand{\btab}{\begin{tabular}}
\newcommand{\etab}{\end{tabular}}
\newcommand{\bc}{\begin{center}}
\newcommand{\ec}{\end{center}}
\newcommand{\bi}{\begin{itemize}}
\newcommand{\ei}{\end{itemize}}
\newcommand{\bfi}{\begin{figure}}
\newcommand{\efi}{\end{figure}}
\newcommand{\ben}{\begin{enumerate}}
\newcommand{\een}{\end{enumerate}}
\newcommand{\bdes}{\begin{description}}
\newcommand{\edes}{\end{description}}
\newcommand{\bay}{\begin{array}}
\newcommand{\eay}{\end{array}}
\newtheorem{Theorem}{Theorem}%[section]
\newtheorem{Corollary}{Corollary}
\newtheorem{Remark}{Remark}
\newcommand{\be}{\begin{eqnarray}}
\newcommand{\ee}{\end{eqnarray}}
\def\sn{\sum_{i=1}^n}
\def\om{\omega}
\def\Var{{\rm Var}}
\def\Cov{{\rm Cov}}
\DeclareMathOperator*{\argmin}{argmin}
\def\Om{\Omega}
\def\om{\omega}
\def\inv{^{-1}}
\def\hmu{\hat{\mu}}
\def\SNW{\hat{\Sigma}^{\textrm{NW}}}
\def\SLF{\hat{\Sigma}^{\textrm{LF}}}
\def\SLFt{\tilde{\Sigma}^{\textrm{LF}}}
\def\Sig{\Sigma}
\def\hSig{\hat{\Sig}}
\def\mcSp{\mathcal{S}_p}
\def\bco{\iffalse}
\def\d{\textrm{d}}
\begin{document}

\begin{frontmatter}

% "Title of the paper"
\title{Fr\'echet Estimation of Time-Varying Covariance Matrices From Sparse Data, With Application to the Regional Co-Evolution of Myelination in the Developing Brain}
\runtitle{Time-Varying Covariance from Sparse Data}

\author{\fnms{Alexander} \snm{Petersen,}\thanksref{m1}\corref{}\ead[label=e1]{petersen@pstat.ucsb.edu}},
\author{\fnms{Sean} \snm{Deoni}\thanksref{m2,t1}\ead[label=e2]{hgmueller@ucdavis.edu}}
\and
\author{\fnms{Hans-Georg} \snm{M\"uller}\thanksref{m3,t2}\ead[label=e2]{hgmueller@ucdavis.edu}}
\affiliation{Department of Statistics and Applied Probability, University of California, Santa Barbara\thanksmark{m1}}
\affiliation{Department of Pediatrics, Brown University, Providence\thanksmark{m2}}
\affiliation{Department of Statistics, University of California, Davis\thanksmark{m3}}
\thankstext{t1}{Supported by the National Institutes of Mental Health (R01 MH087510) and the Bill and Melinda Gates Foundation (OPP11002016).} \thankstext{t2}{Supported by NSF grants DMS-1407852 and DMS-1712864 and the Bill and Melinda Gates Foundation (OPP1119700).}
\address{Address of Alexander Petersen\\
Statistics and Applied Probability \\
University of California \\
Santa Barbara, CA 93106-3110 \\
\printead{e1}}
%\address{Address of Hans-Georg M\"uller\\
%Department of Statistics \\
%Mathematical Sciences Building 4118 \\
%399 Crocker Lane \\
%University of California, Davis \\
%One Shields Avenue \\
%Davis, CA 95616 \\
%\printead{e2}}

\begin{keyword}
\kwd{Myelination; Neurocognitive Development; MRI; Covariance; Correlation Analysis; Time Dynamics; Random Objects; Local Smoothing}
\end{keyword}

\runauthor{Petersen, Deoni and M\"uller}

\begin{abstract}

Assessing brain development for small infants is important for determining how the human brain grows during the early period of life when the rate of brain growth is at its peak. The development of MRI techniques 
has enabled the quantification of brain development. A  key quantity that can be extracted from MRI measurements is the level of myelination, where myelin acts as an insulator around nerve fibers and 
its deployment makes nerve pulse propagation more efficient. The co-variation of myelin deployment across different brain regions provides insights  into the co-development of brain regions and can be assessed as a correlation matrix that varies with age. Typically, available data for each child are very sparse, due to the cost and logistic difficulties of arranging MRI brain scans for infants. We showcase here a method where data per subject are limited to measurements taken at only one random age, 
while aiming at the time-varying dynamics. This situation is encountered more generally in cross-sectional studies where one observes  $p$-dimensional vectors at one random time point per subject and is interested in the $p \times p$ correlation matrix function over the time domain. The challenge is that at each observation time one observes only a $p$-vector of measurements but not a covariance or correlation matrix. For such very sparse data, we develop a Fr\'echet estimation method. Given a metric on the space of covariance matrices, the proposed method generates a matrix function where at each time the matrix is a non-negative definite covariance matrix, for which we demonstrate consistency properties. We discuss how this approach can be applied to myelin data in the developing brain and what insights can be gained.

\end{abstract}

\end{frontmatter}

\section{Introduction}

Modern Magnetic Resonance Imaging (MRI) methodology has made it possible to study structural elements of the brain for small 
infants. Of special interest is to utilize this technique for the study of how the brain grows and develops in the early childhood years, a period of rapid brain and cognitive growth 
\cp{lenr:06}, during which 
the brain's structural and functional networks are formed.  The maturity of the developing brain is customarily measured in terms of level of myelination, which can be extracted from MRI signals. Myelin serves as an insulating sheath around nerve fibers and higher myelin levels are thought to be associated with more efficient signaling and thus improved brain function.   Altered brain connectivity has been hypothesized to underlie a number of intellectual, behavioral, and psychiatric disorders  \cp{d04, d16, d28}. To evaluate the development of brain connectivity, the  evolution  of correlations  between myelin levels for various white matter  brain regions as a function of age in normal children is of interest.   This is because brain development can be characterized by the degree of myelination and time-varying correlation matrices that quantify the dynamic correlations of myelination between different brain regions 
pinpoint the regions that are developing at similar age periods, characterizing the spatio-temporal brain development map. 

A general problem for the analysis of MRI studies in children  is the sparsity of data in the time domain, which is a consequence of cost and logistic difficulties. Even in studies that were intended to be longitudinal, for many infants only measurements at one  point in time are available. 
This motivates the development of the approach that we present here, where we obtain consistent estimates of the underlying  time-varying correlation matrices from sparse observations, where the estimates are guaranteed to be 
symmetric and nonnegative definite matrices at each time argument.  
We consider here the sparsest possible case, namely that  each subject is observed at only one random age. Our methodology is  of interest for any study where one aims at time-dynamic information for the  correlation/covariance  structure of continuous multivariate observations, when one essentially has only one measurement per subject  available. 

In our application to brain development, the underlying process of each subject corresponds to the myelination levels for $p=21$ brain regions, which evolve continuously over time.  However, observations are measured only at a single time point, so that the data available for each subject are the (random) time point at which the measurement was taken as well as a multivariate $p$-vector of region-specific myelin levels. As the data are vectors from which the smoothly varying correlation matrix is the desired target, we show here that an adaptation of Fr\'echet regression \cp{mull:18:3} provides a simple and effective approach to recover the correlation matrices. The proposed method produces for each age an estimate of the population correlation matrix with the following properties: the estimates are correlation matrices themselves; and they are consistent, with convergence rates that can be derived from the general theory of Fr\'echet regression. 

In Section~\ref{sec: bambam}, we will demonstrate these methods with data obtained as part of an on-going accelerated longitudinal neuroimaging study, the Brown University Assessment of Myelination and Behaviour Across Maturation (BAMBAM) study \cp{d143,d165}, by estimating time-varying correlation matrix functions. The cross-correlation matrices of myelination level in dependency on age entail a spatio-temporal map of development that indicates the locations of early and late development, as well as the changing connectivity of various sets of regions of interest. For the  children in the BAMBAM study, at each age where a MRI was obtained, the Early Learning Composite (ELC) score is also available, which quantifies neurocognitive development. As an application of the estimated covariance structure between the myelination levels of various brain regions across age, we use this estimated covariance matrix, in addition to a similarly constructed cross-correlation between the ELC outcome and the brain region myelin levels, to obtain linear regressions of ELC versus the actual myelination levels for the $p=21$ brain regions. Since the coefficients in the linear regression are also time-dependent, without the age-dependent covariance and cross-covariance estimates this regression would not be possible as for each age where predictors (brain myelination levels) and response (ELC) are available, we have only one measurement.  Our analysis pinpoints the relevance of various brain regions for neurocognitive outcomes in dependence of the age of infants. 

Previous work on modeling and quantifying brain development has not addressed the modeling of cross-correlations and covariances, with the exception of \ci{mull:17:10},
where ad hoc kernel smoothing techniques were employed to obtain pairwise cross-correlation functions for the sparsely observed BAMBAM data. However, the ensemble of the recovered pairwise correlations does not form a valid covariance/correlation matrix function, which at each age argument is required to be symmetric and non-negative definite \cp{mull:16:2}. Of these properties, symmetry can be easily guaranteed, while non-negative definiteness is a different matter and is not guaranteed for pairwise correlation function estimation approaches.  

In this paper, we study kernel smoothing methods for the coherent estimation of these time-varying covariance matrices, where ``coherent" means that the estimators are symmetric nonnegative definite matrices at all time points.  In Section~\ref{ss: mfd_methods}, we describe existing smoothing techniques which allow one to estimate the individual elements of each covariance matrix and discuss their drawbacks in the current setting.  Sections~\ref{ss: cfm} and \ref{ss: frechet_est} reformulate the problem as the smoothing of entire covariance matrices, opening a connection to a recent method for regression for complex objects, the  \emph{Fr\'echet regression} model.  Section~\ref{sec: bambam} gives more detailed background on the BAMBAM study and the Fr\'echet estimation technique is applied to study dynamic myelin correlations.  We also demonstrate that the estimated covariance matrix functions are instrumental to establish the relation between the myelination levels over the 21 brain regions with concurrent ELC scores.  Section~\ref{sec: sims} validates the approach by demonstrating in a simulation study the ability of the Fr\'echet estimator to consistently recover dynamic covariances from such sparse data using similar sample size and dimension, and its superiority to an alternative kernel estimator.  Section~\ref{sec: theory} demonstrates theoretical advantages of the Fr\'echet smoother and extends theory for consistent estimation, which, by necessity, must be performed using biased and degenerate covariance matrices as only multivariate random vectors are observed.

\section{Regression with Covariance as Response}
\label{sec: methods}

\subsection{Preliminaries}
\label{ss: prelim}

We begin by defining  the data objects and targets for estimation.  The observed data consist of i.i.d. pairs $(X_i, Y_i)$, $i = 1,\ldots, n$, where $X_i$ represents the age at which the child was examined and $Y_i\in \mathbb{R}^{p}$ is a vector that contains the myelin water fraction (MWF) levels at $p = 21$ brain regions measured during the visit (only available at one age).   It is natural to think of $Y_i$ as a point observation of a latent smooth multivariate stochastic process \mbox{$\{U_i(x) \in \mathbb{R}^p:\, x \in [0, T]\}$,} where the $U_i$ are i.i.d. mean-square continuous processes \cp{ash:75,hsin:15}, and \mbox{$Y_i = U_i(X_i).$}  Thus, the process $U_i$ tracks the MWF levels of subject $i$ at all ages, and $Y_i$ is the value of this process at a random age $X_i$ that is independent of the process $U_i.$ The targets of interest are the cross-sectional mean and covariance of the latent process, which can be expressed in terms of the conditional mean and covariance of $Y_i | X_i$ via
\begin{equation}
\label{eq: mean_cov}
\mu(x) = E(Y_i | X_i = x) = E(U_i(x)), \quad \Sigma(x) = \Cov(Y_i |X_i = x) = \Cov(U_i(x)).
\end{equation}

Notice that the covariance target is a function $\Sigma:[0,T] \rightarrow \mcSp$, where $\mcSp$ is the space of symmetric nonnegative definite matrices of dimension $p$. 
Similarly, one is interested in the corresponding pointwise correlation matrices $\mc{R}(x)$.  Estimators $\hmu_j(x)$ of the component functions $\mu_j(x)$ of $\mu(x)$ can be obtained in a variety of ways, including smoothing splines and local polynomial methods or other scatterplot smoothers, and have been well explored while this is not the case for the estimation of $\Sigma(x)$. Therefore we will focus our attention on the covariance estimation. 

\subsection{Cross-Covariance Estimation}
\label{ss: mfd_methods}

Established nonparametric techniques could be invoked for the estimation of the individual matrix elements $\Sigma_{jk}(x)$ by smoothing so-called \emph{raw covariances}\begin{equation}
\label{eq: raw_cov}
c_{ijk} = (Y_{ij} - \hmu_j(X_{i}))(Y_{ik} - \hmu_{k}(X_{i})).
\end{equation}
Let $K$ be a univariate density function, $h_{jk}$  a bandwidth and $K_h(y) = h\inv K(yh\inv)$.  Local constant (Nadaraya-Watson) and local linear smoothing are two common tools that can be enlisted to smooth these raw covariances, with respective estimators $\hat{\Sigma}_{jk}^0(x) = \hat{a}$ and $\hat{\Sigma}_{jk}^1(x) = \hat{b}$, where
\begin{align*}
\hat{a} &= \argmin_{a \in \mathbb{R}} \sn K_{h_{jk}}(X_i - x)(c_{ijk} - a)^2, \\ (\hat{b},\hat{c}) &= \argmin_{b,c\in \mathbb{R}}\sn K_{h_{jk}}(X_i - x)(c_{ijk} - b - c(X_i - x))^2.
\end{align*}
As solutions to weighted least squares problems, both of these kernel estimators take the form of a weighted average \cp{fan:96},
\begin{equation}
\label{eq: kern_smooth}
\hSig_{jk}^m(x) = \sn w_{in,m}c_{ijk},
\end{equation}
where $m = 0,1$ correspond to Nadaraya-Watson and local linear smoothing, respectively.  Letting $$r_{\ell} = \frac{1}{n}\sn K_{h_{jk}}(X_{i} - x)(X_{i} - x)^\ell, \quad \ell = 0,1,2,$$ and $\sigma^2 = r_0r_2-r_1^2$, the weights $w_{in,m}$ in (\ref{eq: kern_smooth}) are, respectively, 
\begin{align}
w_{in,0} &= w_{in,0}(x, h_{jk}) = \frac{K_{h_{jk}}(X_{i} - x)}{\sn K_{h_{jk}}(X_{i} - x)} \quad \textrm{and} \label{w0}\\
w_{in,1} &= w_{in,1}(x, h_{jk}) = \frac{1}{\sigma^2}K_{h_{jk}}(X_{i} - x)\left[r_2 - r_1(X_{i} - x)\right], \label{eq: weights}
\end{align}  
both satisfying  $\sn w_{in,m} = 1.$  This approach is closely related to the method used in \ci{mull:17:10}.

The above described  estimators are known to be consistent on the interior $(0, T)$, while the local linear estimator is preferable due to its improved performance near the boundaries, which can make a substantial difference  in practical applications \cp{fan:96}.  However, in our application, these estimators possess some undesirable properties when combined into matrix estimators $\hSig^m(x)$, $m = 0,1$.  First, while both lead to symmetric matrices, neither is guaranteed to be nonnegative definite, so that standard analyses involving covariance matrices cannot be applied without some post-hoc alterations.  

In the case of the Nadaraya-Watson estimator, this can in fact be resolved by setting a common bandwidth $h_{jk} = h$. Since the weights $w_{in,0}$ are strictly positive and the space of covariance matrices is convex, this will give a coherent unified estimator $\hSig^m(x) \in \mcSp$.  However, this apparent fix does nothing to resolve poor performance near the boundaries.  On the other hand, even with a common bandwidth, the local linear estimator can still fail to be nonnegative definite, since $w_{in,1}$ can be (and often are) negative, especially near the boundaries.  If one requires a true covariance, a natural procedure would be to project onto the space of covariance matrices by truncating negative eigenvalues to zero.  In the next section, we justify this approach by reframing the covariance estimation problem as a geometric problem involving Fr\'echet means and an adapted version of a recently developed regression method for responses that are random objects in a metric space.

\subsection{Time-varying Covariance Matrices as Conditional Fr\'echet Means}
\label{ss: cfm}

A key step is to characterize the matrix $\Sigma(x)$ as a conditional Fr\'echet mean.  The Fr\'echet mean of a random element $Z$ of an arbitrary metric space $(\Om, d)$ is
\begin{equation}
\label{eq: fr_mean}
E_\oplus(Z) = \argmin_{\om \in \Om} E\left(d^2(Z, \om)\right),
\end{equation}
where existence can be guaranteed by compactness, but uniqueness cannot.  If $\Om$ is a convex subset of a Euclidean space, and $d$ is the Euclidean metric, then the ordinary mean and Fr\'echet mean are known to coincide. Let $U$ be a random $p$-vector with zero mean and covariance matrix $\Lambda$ and set $Z = UU^\top.$  Then, if $\Om = \mcSp$ and $d = d_F$ is the Frobenius distance, it can be easily shown that the unique Fr\'echet mean of $Z$ is $E_\oplus(Z) = \Lambda$ due to the fact that $d_F$ is a Euclidean norm as it arises from an inner product.  

Now, let $X$ and $U$ be generic copies of the sample elements $X_i$ and the latent multivariate processes $U_i$,  $Y = U(X)$ and $C(X) = (Y - \mu(X))(Y - \mu(X))^\top.$  Then, for the Frobenius metric $d_F,$ the matrix $\Sig(x)$ in (\ref{eq: mean_cov}) can naturally be expressed as
\begin{equation}
\label{eq: Sig_frechet}
\Sig(x) = E\left(C(X)|X = x\right) = E_\oplus\left(C(X)|X = x\right).
\end{equation}

This characterization suggests that, while smoothing is an appropriate method for estimation of $\Sig(x)$, it should not be done on the scalar raw covariances in (\ref{eq: raw_cov}), as this merely targets the individual entries of $\Sig(x)$. Rather, one should smooth entire covariance matrices and target the Fr\'echet mean $\Sig(x).$  As our data consist of multivariate observations, the data objects closest to our targets are the raw covariance matrices
\begin{equation}
\label{eq: raw_cov_mat}
\hat{C}_{i} = (Y_{i} - \hmu(X_{i}))(Y_{i} - \hmu(X_{i}))^\top.
\end{equation}
Before we develop the estimator, we provide two remarks that highlight the unique difficulties and challenges associated with this estimation problem.

\begin{Remark}
\label{rm: bias}
The covariances $\hat{C}_{i}$ are \emph{biased} in the sense that $$E_\oplus(\hat{C}_{i} | X_{i} = x) \neq \Sigma(x).$$  This is because they serve only as approximations for the unobservable quantities 
\begin{equation}
\label{eq: C_known}
C_{i} = (Y_{i} - \mu(X_{i}))(Y_{i} - \mu(X_{i}))^\top,
\end{equation}
for which $E_\oplus(C_{i} | X_{i} = x) = \Sigma(x).$ 
\end{Remark}
\begin{Remark}
\label{rm: degenerate}
Another feature of the raw covariances $\hat{C}_{i}$, which we would still face if we actually observed the unknown $C_{i}$ in (\ref{eq: C_known}), is that they are degenerate, i.e., they have rank one.  The challenge is then that we are attempting to estimate the object $\Sigma(x)$ that lies in the interior of $\mcSp$ using a sample of objects  that, by definition, reside on the boundary of the space.  
\end{Remark}

In recent years, the smoothing of covariance matrices has been studied in various contexts, with two prominent examples being the mapping of neural pathways using DTI \cp{yuan:12,carm:13} and the spatial modeling of speech recordings \cp{tava:16}.  In  the DTI application, the data consist of a random sample of pairs $(P_i, S_i)$, where $P_i \in \mc{R}$ and $S_i$ is a symmetric positive definite matrix, and the target is the conditional Fr\'echet mean $E_\oplus(S_i | P_i = p)$ under two particular Riemannian metrics for $\mcSp.$  Being based on the sampling of non-degenerate covariance matrices, due to the availability of much richer data in this context, these previous approaches were not affected by  the challenges summarized in  Remarks~\ref{rm: bias} and \ref{rm: degenerate}. Furthermore, since Fr\'echet means generally depend on the chosen metric, in these applications specific features of the chosen metric could be exploited.   This is in contrast to our situation, where the goal is to gain an understanding of the  time-varying covariability of multivariate data. The canonical metric for this purpose is the Frobenius or $L^2$ metric for covariance matrices.

The importance of estimating time-varying covariance matrices is also evident in the spatial analysis of sound objects in \ci{tava:16}, where the  covariance also has a spatial component.  Here, the smoothing of covariances takes place only over space and, at least conceptually, the objects that are smoothed are not random but fixed.  In practice, these objects need to be estimated, and assumptions are included under which the estimation error is negligible.  Such assumptions are common in functional data settings, but do not apply to the cross-sectional data that we consider here, where the raw data are decidedly inconsistent.   In short, the nature of our data requires overcoming the challenges that are highlighted in the above remarks, whereas these challenges do not arise in covariance estimation settings that have been previously studied. 

\ci{tava:16} also introduced the so-called $d$-covariance for multivariate data.  In our setting, if $d$ is a metric on $\mcSp$, the $d$-covariance is
\begin{equation}
\label{eq: dcov}
\Sigma_\oplus^d(x) = \argmin_{S \in \mcSp} E\left(d^2(C(X), S)|X = x\right).
\end{equation}
The motivation for the $d$-covariance is that other metrics besides the Frobenius metric are more suitable for analysis on $\mcSp$, which is a nonlinear  Riemannian manifold.  As \ci{tava:16} demonstrated,  if $d$ is the square-root metric \mbox{$d(C_1, C_2) = d_F(C_1^{1/2}, C_2^{1/2})$,} one can readily use linear methods on the square-root space, which can be identified with the linear space of symmetric matrices.  The $d$-covariance thus provides a natural class of covariance objects that can be estimated using Fr\'echet methods.  Alternative metrics that would have similar limitations as the $d$-covariance are those in the Box-Cox class, where
\mbox{$d_{\alpha}(C_1, C_2) = d_F(C_1^{\alpha}, C_2^{\alpha})$}, with appropriately defined matrix logarithms for the case $\alpha =0$ \cp{pigo:14,mull:16:2}.

However, since $\Sigma(x) = \Sigma_\oplus^d(x)$ if and only if $d$ is the Frobenius metric, one cannot smooth under an alternative metric without losing the ability to interpret the target as a true covariance of the observed random vector.  In the particular application of these methods to the study of myelination that is our primary motivation,  we work with the Frobenius metric for two reasons.  First, practitioners are more comfortable with assessing ordinary covariance and correlation between myelin levels due to its well-established interpretation.  Second, many statistical models that one might wish to apply to the study of myelin levels necessitate an estimate of their covariance properties, which would not be provided  when applying a smoothing procedure under a different metric, as this results in estimates that target a  different quantity.  The motivating problem of regressing ELC scores, which assess cognitive development, on myelin levels as predictors provides one such instance.  

While 
we consider Fr\'echet methods under the Euclidean or Frobenius metric, it bears emphasizing that our approach is not restricted to this choice.  Indeed, the theory in Section~\ref{sec: theory} is based on results in  \ci{mull:18:3}, which hold under very general conditions and can readily be extended to the Fr\'echet estimation of $d$-covariances for various metrics $d$.

\subsection{Fr\'echet Estimation}
\label{ss: frechet_est}

The expression of $\Sigma(x)$ as a conditional Fr\'echet mean in (\ref{eq: Sig_frechet}) suggests estimation by a locally weighted sample Fr\'echet mean \cite[see, e.g.,][]{pigo:14}  
\begin{equation}
\label{eq: est_NW}
\SNW(x) = \argmin_{C \in \mcSp} \sn K_h(X_{i} - x)d_F^2(C, \hat{C}_{i}).
\end{equation}
The individual components of this estimator coincide with those in (\ref{eq: kern_smooth}) for $m = 0$, with weights as in (\ref{w0}) 
and  all bandwidths $h_{jk}$ being equal.  So, while the standard Nadaraya-Watson smoother does not incorporate the constraints of the space $\mcSp$, it results in a standard Fr\'echet estimate of $\Sigma(x)$,  due to convexity.  As mentioned before, this estimator does not work well near boundaries due to bias, and a local linear type estimator based on weights (\ref{eq: weights}) is preferable.

Local linear kernel estimation does not immediately generalize to nonlinear spaces, regardless of whether they are convex or not, as it introduces negative weights near the boundaries, which is necessary to control bias.  In \ci{yuan:12}, intrinsic local polynomial regression (ILPR) was proposed using manifold features of $\mcSp$ under metrics different from $d_F.$  Since the Frobenius metric is the natural choice given our target, ILPR is not directly applicable and, more importantly, more complex than necessary.  A simpler, yet more general, re-characterization of local linear smoothing for arbitrary metric spaces (Local Fr\'echet Regression or Fr\'echet smoothing) has recently been developed in \ci{mull:18:3}.  Fr\'echet smoothing can be easily  applied to the current problem of smoothing covariance matrices, where it becomes 
\begin{equation}
\label{eq: est_LF}
\SLF(x) = \argmin_{C \in \mcSp}\sn w_{in,1}(x, h)d_F^2(C, \hat{C}_{i}),
\end{equation}
and is represented as a weighted Fr\'echet mean, with weights $w_{in,1}$ as in 
(\ref{eq: weights}).

In Fr\'echet smoothing, we are not weighting the covariances directly, but rather the squared distances which define the Fr\'echet functional to be minimized.  Again,  these weights can be negative, since they are derived from the local linear weights in a standard Euclidean smoothing problem, so that $\sn w_{in,1}\hat{C}_{i}$ is not guaranteed to be in $\mcSp$.  However, similar to the Nadaraya-Watson estimator, the local Fr\'echet estimator has an analytic expression in the case of covariance matrices and the Frobenius metric, given by
$$\SLF(x) = \Pi_{\mcSp}\left(\sn w_{in,1}(x, h)\hat{C}_{i}\right),$$ where $\Pi_{\mcSp}$ is the projection operator onto $\mcSp$ under $d_F$.  This projection operator is easy to compute, as it corresponds to truncating negative eigenvalues to zero while keeping the eigenspaces the same. It is interesting, albeit unsurprising, that, while $\SLF(x)$ as defined in \eqref{eq: est_LF} is clearly an intrinsic estimator as a minimizer of a Fr\'echet functional, it is also equivalent to the projection onto $\mcSp$ of the ordinary, extrinsic local linear estimator.  This phenomenon is explained by the fact that Fr\'echet means under Euclidean metrics coincide with ordinary means, along with the uniqueness of Euclidean projections onto closed, convex spaces such as $\mcSp.$

\section{Regional Co-evolution of Brain Myelination in the Developing Brain} 
\label{sec: bambam}

\subsection{Background and the BAMBAM Study}

Altered early brain development is hypothesized to underlie many neurological, behavioral, and intellectual disorders.  Infancy and early childhood are sensitive periods of brain growth, coinciding with the emergence of nearly all cognitive, behavioral, and social functioning abilities.  Beginning in utero, myelination advances rapidly over the first 2 years of life in a carefully choreographed caudal-cranial, posterior-to-anterior pattern \cp{d06,d07}, and continues throughout childhood and adolescence \cp{d08}.  This maturation pattern is driven by the tight regulation of myelination by neural activity \cp{d09,d010,d011}, which likely underpins its spatio-temporal coincidence with emerging neurobehavior \cp{d012,d013,d43}.

The prolonged nature of myelination imparts a high degree of flexibility and plasticity to developing neural systems.  However, this protracted timeline of development also places these systems at prolonged risk to injury or deviant development \cp{d014}.  It is increasingly recognized that alterations in myelination timing and/or extent can significantly affect behavioral and cognitive outcomes \cp{d012}, with altered white matter microstructure being a consistent finding in many neurological and neuropsychiatric disorders \cp{d12, d015,d016}. Despite the importance of coordinated neural communication, relatively little is known regarding the development of structural and functional networks in infants or toddlers or  the relationships linking the emergence of neural networks to evolving cognition or neurobehavioral outcomes.

The Brown University Assessment of Myelination and Behavior Across Maturation (BAMBAM) study provides voxelwise data 
on the myelination for neurotypical children that were very sparsely measured across age, and which we analyze in Section~\ref{ss: bambam} to demonstrate some of these issues. Only children with measurements before the age of 1250 days were included, resulting in a sample size of  $n=223$ children.  Additionally, for those children who were assessed at multiple time points during this period, one observation per subject was selected randomly. 

Imaging measures include anatomical $T_{1}$-weighted ($T_{1}w$), quantitative relaxometry ($qT_{1}$, $qT_{2}$),  and quantitative myelin water fraction (MWF), among other measurements. The children included in the study were  devoid of major risk factors for neurologic and psychiatric disorders.  Children with in utero alcohol or illicit substance exposure, premature ($<37$ weeks gestation) or multiple birth, fetal ultrasound abnormalities, complicated pregnancy, APGAR scores $< 8$, NICU admission, neurological disorder, psychiatric or developmental disorders in the infant, parents or siblings were excluded \cp{d030,d031}. 

\begin{figure}
\includegraphics[scale=0.57]{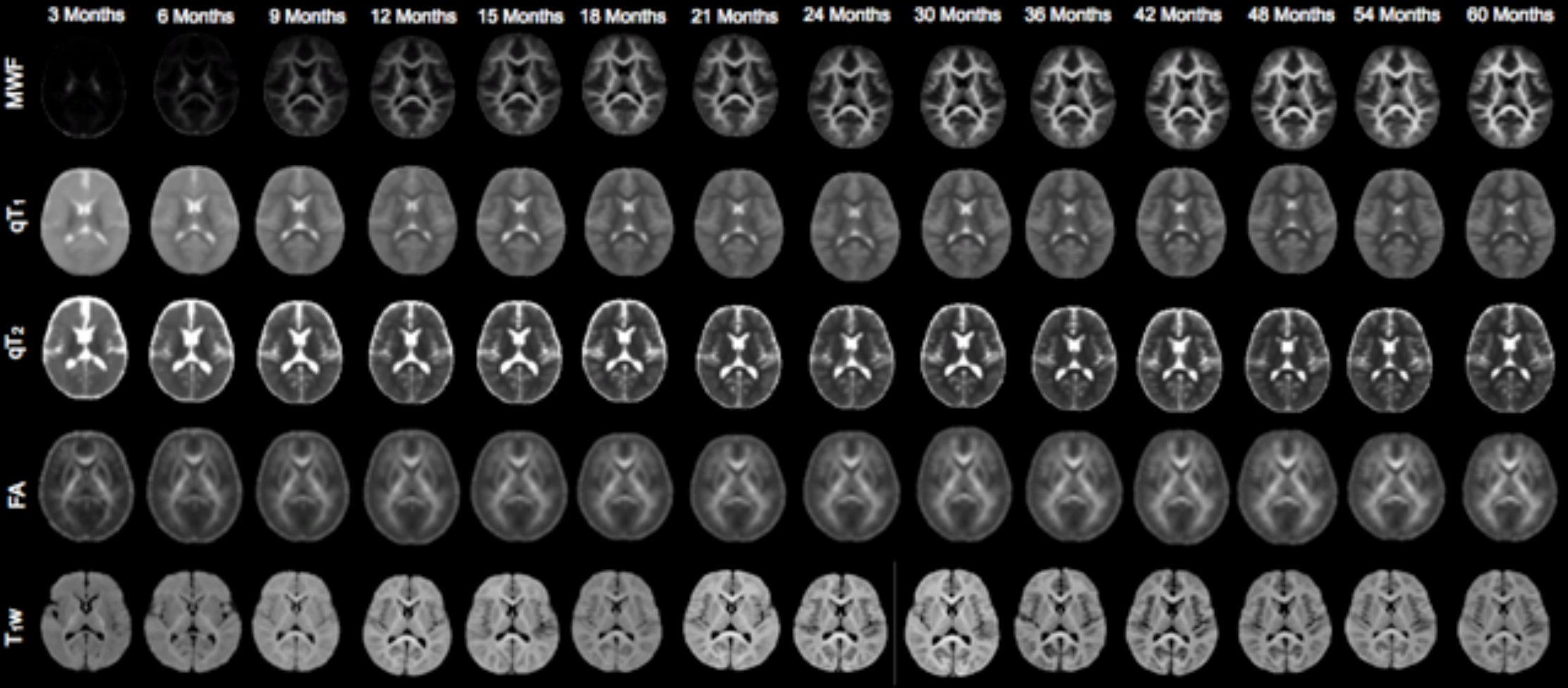} 
\caption{\small{A multi-modal assessment of development.  Mean myelin water fraction (MWF), quantitative relaxometry ($\textrm{qT}_1,$ $\textrm{qT}_2$) and fractional anisotropy (FA) maps, and $\textrm{T}_1$-weighted ($\textrm{T}_1$w) images from 3 months to 5 years of age.}} 
\label{mri}
\end{figure}

The BAMBAM study acquired high-quality and artifact-free MRI data that inform on different characteristics of the developing brain \cp{d45,d47,d62,d63,d68}, including morphometric (whole-brain, white and gray matter volume), myelin water fraction (which informs more specifically on myelination, a key process of brain connectivity), and other modalities \cp{d113,d116,d143}.  Such high quality MRI measurements (see Figure \ref{mri}) are a prerequisite to assess brain development \cp{d30,d161,d165,d180,d181,d182,d183,d186,d187}. It is also of interest to relate myelin level correlation across brain regions  to concurrent  developmental and cognitive scores \cp{d61}, where we use  the Early Learning Composite (ELC), derived from the  verbal and non-verbal development quotient (VDQ and NVDQ, respectively) scores of the Mullen Scales of Early Learning \cp{d124}; these scores are standardized for age.   It is of special interest to identify the brain regions for which myelination levels  are most predictive of these scores at certain ages. This will help to pinpoint  the relative relevance of brain regions for neurocognitive performance across age, a goal that has been elusive so far.

The implementation of Fr\'echet regression that we propose here is a one-step approach that leads directly to bona fide correlation/covariance matrix functions and is supported by theoretical consistency and rate of convergence results that are applicable even for the extremely sparse data case that we consider here. We demonstrate that also in practice this method leads to sensible results when applied to the very sparse data case, where the observed data do not correspond to a sample covariance matrix paired with each predictor, but rather just one random vector of observed levels for each predictor level. 
Our goal is to quantify the evolution of the cross-correlations with age from data that contain only one measurement per subject. For this, we select $p=21$ discrete anatomical brain regions as an example. The methods generalize to other local aggregations of the MWF levels, which are recorded at the voxel level.  These regions are listed in Table~\ref{tab: regions} together with their acronyms used throughout the remainder of the paper.

While modeling cross-correlation and cross-covariance of myelination has not been well explored, 
a related topic with more substantial work is the modeling of the myelin trajectories as a function of age. Nonlinear approaches have been applied previously to model overall myelination level trajectories  by averaging over brain regions and focusing on the dependence of myelin levels as a function of age \cp{d28,d30,d43, d142,d143}. Recently, \ci{mull:17:9} proposed a method to obtain whole brain MWF trajectories, modifying approaches of \ci{mull:10:6,sent:11}. For the case where repeated measurements are available for each subject, ideally densely spaced, related work on modeling multivariate trajectories 
by functional data analysis methods includes \ci{zhou:08,mull:16:13}. 
However, these approaches are not applicable if only 
one or very few observations per subject are available.

\begin{table}[t!]
\scriptsize
  \centering
  \caption{List of all $p = 21$ regions/tracts and their acronyms for which MWF levels are measured. \label{tab: regions}}
  \begin{tabular}{|l c|l c|}
  \hline
    Region & Acronym & Region & Acronym \\ \hline
    body Corpus Callosum &  bCC & left Cingulum & lC \\
    genu Corpus Callosum & gCC & right Cingulum & rC \\
    splenu Coropus Collosum & sCC & left Corona Radiata & lCR \\
    left Frontal & lF & right Corona Radiata & rCR \\
    right Frontal & rF & left Internal Capsule & lIC \\
    left Parietal & lP & right Internal Capsule & rIC \\
    right Parietal & rP & left Optic Radiation & lOR \\
    left Occipital & lO & right Optic Radiation & rOR \\
    right Occipital & rO & left Superior Longitudinal Fasciculus & lSLF \\
    left Temporal & lT & right Superior Longitudinal Fasciculus & rSLF \\
    right Temporal & rT & & \\
    \hline
  \end{tabular}
\end{table}

\subsection{Age-varying correlations of regional myelination}
\label{ss: bambam}

The BAMBAM study cohort includes $n = 233$ children who were observed during the first $1250$ days of life, with the earliest observation at age 65 days.  Some children were observed repeatedly within this time period for a total of $N = 440$ visits, although nearly half (48\%) of the children have just one measurement and 90\% have at most 3 measurements.  For this reason, in our analysis we assume that only a single observation per subject is available, which is chosen uniformly from the available observations.  For comparison purposes, we also conducted this  analysis using the full data set, with corresponding figures being relegated to the Appendix.

We investigated the patterns of MWF development in the $p = 21$ regions/tracts listed in Table~\ref{tab: regions}.  A subset of the observed data are shown in the left panel of Figure~\ref{fig: MWF_traj}, where the children were binned into thirty-five age groups of equal width, and one child per bin was chosen randomly for display.  The right panel of Figure~\ref{fig: MWF_traj} also shows the preliminary estimates of the mean MWF levels, using local linear smoothing with a bandwidth of $50$ days.  This bandwidth was chosen by five-fold cross validation combined across all regions. Comparison with Figure 7 
in \ci{pete:18} shows similar patterns for estimates obtained from the complete data set.

\begin{figure}[t]
\centering
\subcaptionbox{}[2.05in]{\includegraphics{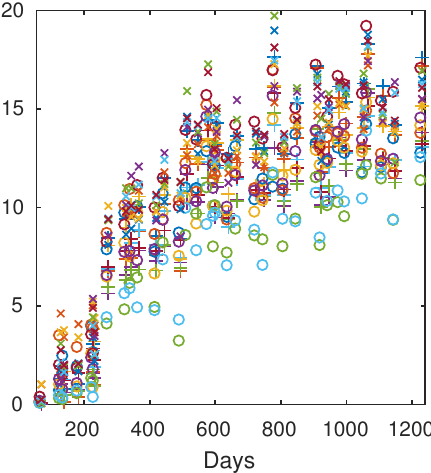}} \subcaptionbox{}[2.85in]{\includegraphics{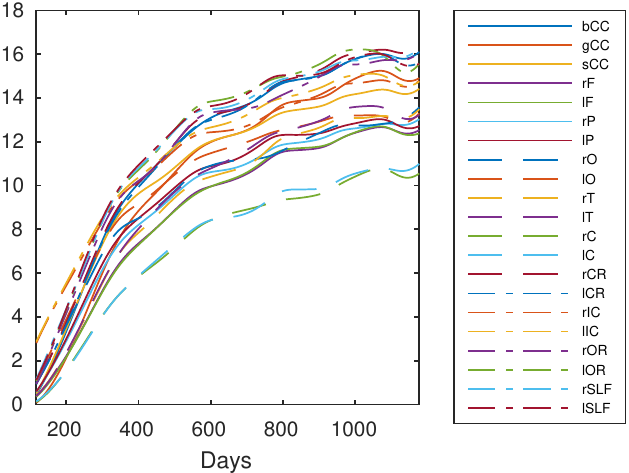}}
\caption{\small (a) Scatterplot of MWF levels for a subset of $20$ children, with different plotting characters and colors corresponding to distinct brain regions. (b) Estimated mean functions for all 21 regions. \label{fig: MWF_traj}}
\end{figure}

\begin{figure}[h!]
\centering
\subcaptionbox{$x = 250$ days}[2.45in]{\includegraphics{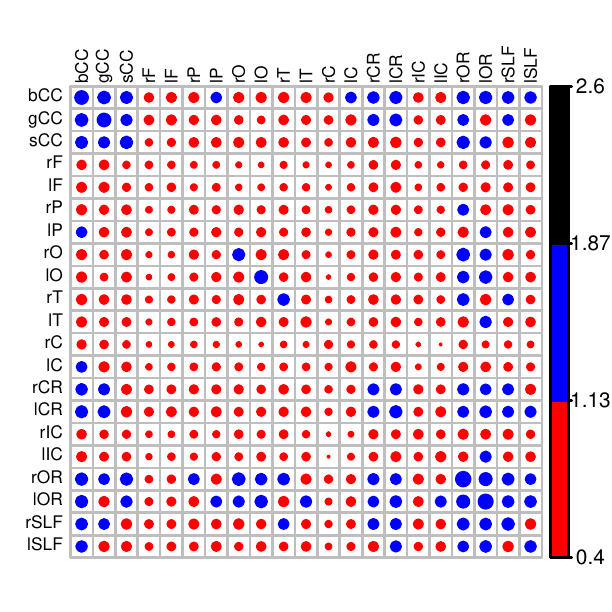}} \subcaptionbox{$x = 500$ days}[2.45in]{\includegraphics{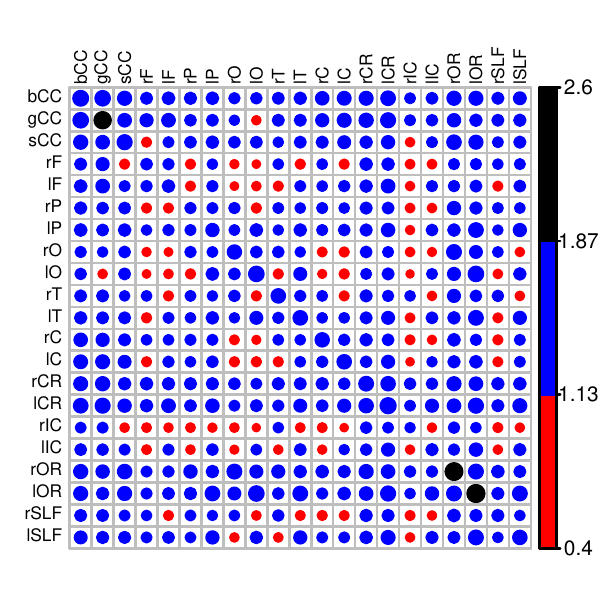}} \\
\subcaptionbox{$x = 750$ days}[2.45in]{\includegraphics{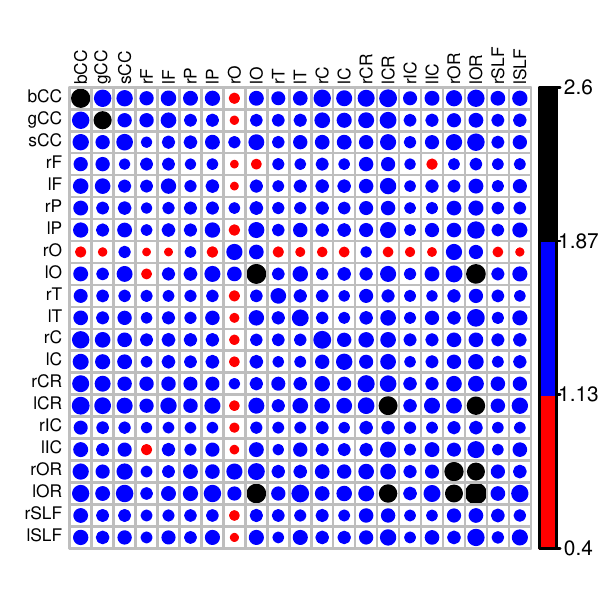}} \subcaptionbox{$x = 1000$ days}[2.45in]{\includegraphics{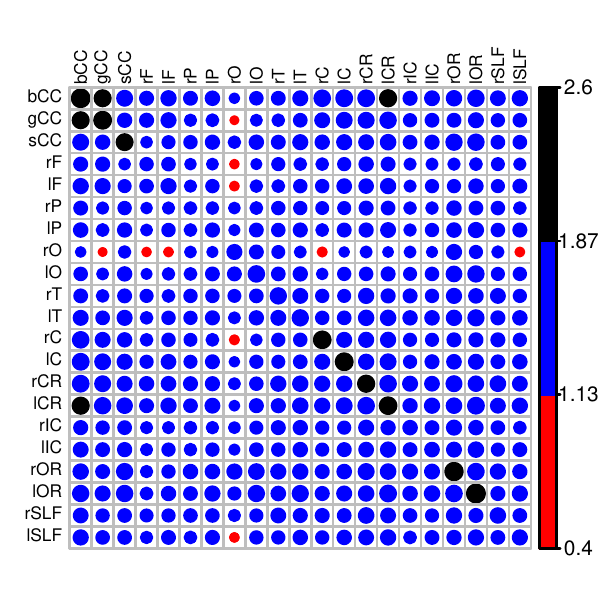}}
\caption{\small Estimated MWF covariance matrices at four distinct ages $x$ (in days). Circle size indicates magnitude of covariance (all are positive), and the color/shade indicates a discretization of the covariance values for easier visual comparison. \label{fig: cov_cvs}}
\end{figure}

We then computed cross-sectional covariance estimates as in \eqref{eq: est_LF}; {the alternative estimators in \eqref{eq: kern_smooth} and \eqref{eq: est_NW} are not shown due to their inferior performance, which is expected theoretically and empirically demonstrated in the simulations of Section~\ref{sec: sims}.  For bandwidth choice, a standard cross validation criterion would not be expected to perform well as the raw covariance matrices are known to live on the boundary of the space (see Remark~\ref{rm: degenerate}). Letting $\hat{\Sigma}_{-i}(X_i)(h)$ denote a leave-out estimate of $\Sigma(X_i)$ using bandwidth $h$, we consider the following criteria, 
\begin{align}
\label{eq: bw_choice}
\hat{h}_1 & = \argmin_h \sum_{i = 1}^n d_F^2(\hat{C}_i,\hat{\Sigma}_{-i}(X_i)(h))^2, \nonumber \\
\hat{h}_2 &= \argmin_h \sum_{i = 1}^n \textrm{tr}(\hat{\Sigma}^{+}_{-i}(X_i)(h)\hat{C}_i),
\end{align}
where $A^+$ denotes the pseudo inverse of a matrix $A$.  As expected, $\hat{h}_1$ tends to undersmooth the data, whereas $\hat{h}_2$ oversmooths due to the inversion.  We found that using the geometric mean $\hat{h}_\textrm{opt} = (\hat{h}_1\hat{h}_2)^{1/2}$ provides a reasonable choice of 150 days for the covariance smoothing, using five-fold cross validation. 

\begin{figure}
\centering
\subcaptionbox{\label{fig: cor_box}}[2.45in]{\includegraphics{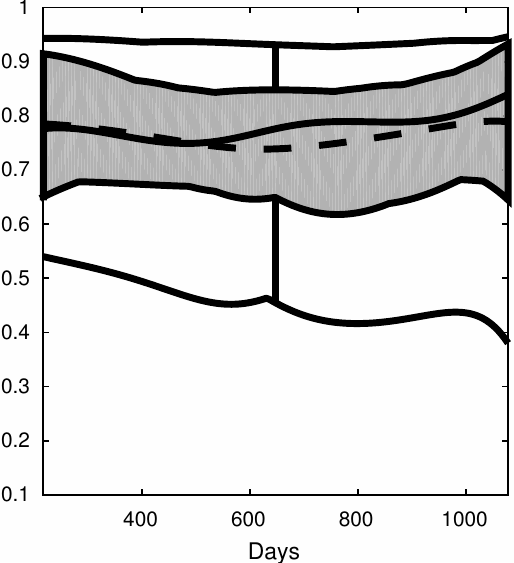}} \subcaptionbox{\label{fig: cor_eig}}[2.45in]{\includegraphics{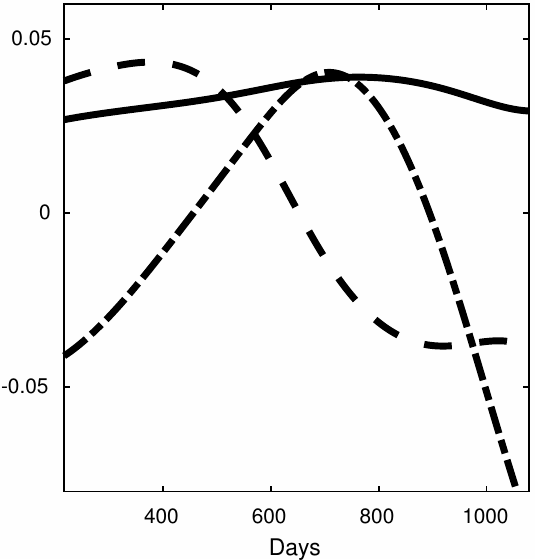}} \\
\subcaptionbox{\label{fig: cor_scores}}[2.45in]{\includegraphics{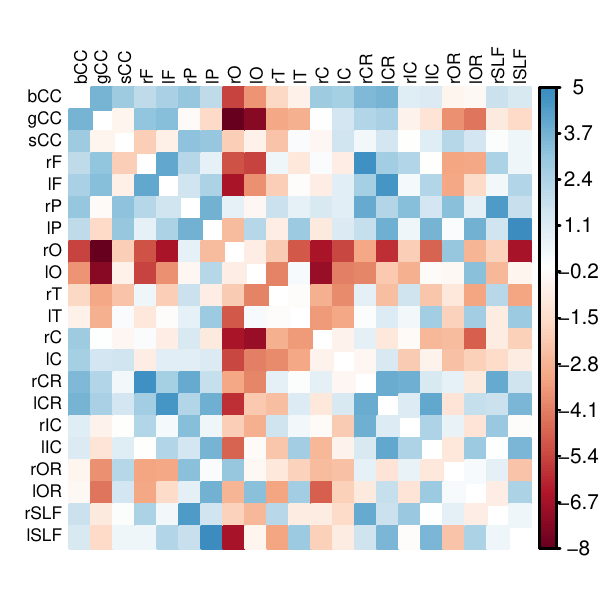}}
\caption{\small (a) Functional boxplot and mean curve (dashed), for correlation curves between the 21 regions. (b) First (solid, FVE $ = 90.1\%$), second (dashed, FVE $ = 6.15\%$) and third (dot-dash, FVE $ = 2.76\%$) eigenfunction representing the variability in the collection of estimated correlation curves. (c) First principal scores for all region pairs. Here FVE stands for Fraction of Variance Explained \cite[for Functional Principal Component Analysis and related issues, see, e.g.,][]{mull:16:3}.
 \label{fig: cor_fpca}}
\end{figure}

We visualize the $231$ distinct covariance curves $\hat{\Sigma}_{jk}(x)$, $1 \leq j \leq k \leq 21$ for four ages  in Figure~\ref{fig: cov_cvs}.  We see from the diagonals that variability of MWF levels increases with age, while covariability also scales but to a lesser extent.  An interesting finding is the weak dependence of the right occipital region with all other regions, despite having a similar variability pattern.   We can also examine the dependency patterns while normalizing the scale by considering the $210$ distinct \emph{correlation} curves $\hat{\mc{R}}_{jk}$, $1 \leq j < k \leq 21$, this time using Functional Principal Component Analysis (FPCA).  We note here that the correlation curves are dependent as they are constructed from the same individuals so that many of the basic results such as consistency of FPCA are not applicable. We also ignore the constraint that correlation curves are between -1 and 1 when implementing FPCA. However, none of this precludes us to use FPCA for purely descriptive purposes, and this is the premise for the following FPCA application. 

 Figure~\ref{fig: cor_box} shows  the pointwise average of these correlation curves along with  a functional boxplot. The mean curve indicates that correlation between regional myelin development remains roughly constant, when averaging across all region pairs.  Comparing these results with the estimates for the full data set, we again find consistency in the dynamics of covariance estimates (Figure~\ref{fig: cov_cvs} and Figure 8 in \ci{pete:18}),
as well as similar patterns in the corresponding correlation curves (Figure~\ref{fig: cor_fpca} and Figure 9 in \ci{pete:18}).

The first $k = 3$ eigenfunctions representing the principal deviations from  the average correlation curve observed in the estimated correlations are shown in Figure~\ref{fig: cor_eig}; the fractions of variance explained (FVE) by these three curves are $90.1\%$, $6.15\%$ and $2.76\%$, respectively.  The first eigenfunction has an increasing trend up to day 800, followed by a decreasing trend. This indicates that variability between correlation patterns amongst distinct region pairs is mostly explained by an increased separation from the mean correlation up until approximately 800 days of age, followed by a regression to the mean, i.e., moving towards the average correlation.  The eigenfunction curves in Figure~\ref{fig: cor_eig} can be further interpreted by examining the corresponding functional principal component scores in Figure~\ref{fig: cor_scores}.  As previously observed, the occipital regions have markedly lower correlations with the remaining regions as indicated by the negative scores in this component. It can also be seen that the strongest correlations in region pairs tend to include the Frontal and Parietal regions as well as the Corona Radiata.

We now aim at modeling the relation between myelin development and cognitive ability, as quantified by the ELC score, building on the above approaches. Denoting 
by  $E_i(x)$ a latent process that tracks the ELC score of the $i$th child continuously over the age domain, we are able to actually observe only $E_i(X_i),$  the cognitive score at the random time $X_i$ at which the child is studied. Recall that $U_i(x)$ is the vector latent process containing the MWF levels for the various regions across age, with mean curve $\mu(x).$  Since both cognition and myelin development are  dynamic processes, we consider a varying coefficient linear model \begin{equation}
\label{eq: ELC_model}
E_i(x) = \beta_0 + \beta^\top(x)(U_i(x) - \mu(x)) + \epsilon_i,
\end{equation}
where in our specific application we can substitute fixed intercept value $\beta_0=100$, owing to the fact that ELC scores are standardized to have a mean value of 100 and constant variance.  

In view of  the linear structure of (\ref{eq: ELC_model}), 
the solution for the slope vector function is
$$
\beta(x) = \Sigma\inv(x)\Gamma(x),
$$
where $\Gamma(x) = \Cov(U_i(x),E_i(x))$ is the dynamic vector of covariances between myelin development and the ELC score., and $\Sigma(x)$ is the time-varying covariance matrix that we are targeting with the approach described above. 

We then propose to estimate the parameter functions $\beta$ in \eqref{eq: ELC_model} by first estimating the elements $\Gamma_j(x)$, $j =1 ,\ldots,21.$ For this, we compute  $G_{ij} = (Y_{ij} - \hat{\mu}_j(X_i))(E_i(X_i) - 100)$ and then apply local linear smoothing for pairs $(X_i, G_{ij})$ with bandwidth $h = 150$ days.  Finally, we use the estimate $\hat{\Sigma}(x)$ obtained by Fr\'echet smoothing to compute the regularized ridge estimate
\begin{equation}
\label{eq: ELC_est}
\hat{\beta}_\lambda(x) = \left[\hat{\Sigma}(x) + \lambda I_p\right]\inv \hat{\Gamma}(x).
\end{equation}
The bandwidths for the estimation of $\mu$ and $\Sigma$ were set to $50$ and $150$, as done previously, and the ridge parameter was chosen to minimize prediction error using five-fold cross validation, resulting in $\hat{\lambda}_{\textrm{opt}} = 0.14$.  Estimates $\hat{\Gamma}(x)$ and $\hat{\beta}_{\hat{\lambda}_{\textrm{opt}}}$ are shown in Figures~\ref{fig: gamma_est} and \ref{fig: beta_est}.  While the covariability between each MWF level and ELC score has similar patterns over time, the interdependencies betwen the MWF levels cause the estimates of the components of the vector function $\beta$ to be considerably more complex. 

For example, these estimates indicate a stark contrast in the effects of myelination level in the Genu of the Corpus Callosum, which is associated with a lower cognitive score, versus white matter in the right Parietal lobe, which is associated with a higher score, particularly pronounced at 600 days, and also between the right Optic Radiation, with lower scores associated, and left Corona Radiata, with higher scores associated and most pronounced at 800 days. Despite the aforementioned similarities in mean and covariance estimates between the reduced and full data sets, comparison of the functional coefficients $\beta$ shown in Figure~\ref{fig: ELC_est} above and Figure 10 in \ci{pete:18} demonstrate their sensitivity to changes in $\hat{\Sigma}(x)$ due to the inversion in \eqref{eq: ELC_est}.

\begin{figure}[ht!]
\centering
\subcaptionbox{$\hat{\Gamma}$ \label{fig: gamma_est}}[2.95in]{\includegraphics{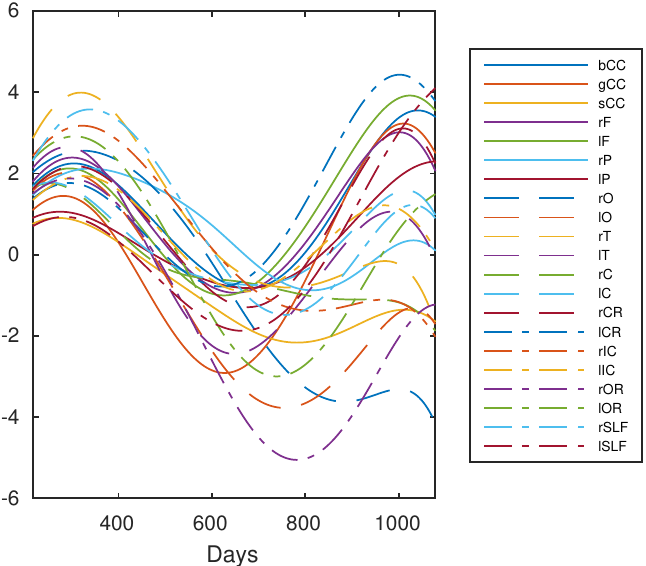}} \subcaptionbox{$\hat{\beta}_{\hat{\lambda}_{\textrm{opt}}}$\label{fig: beta_est}}[1.95in]{\includegraphics{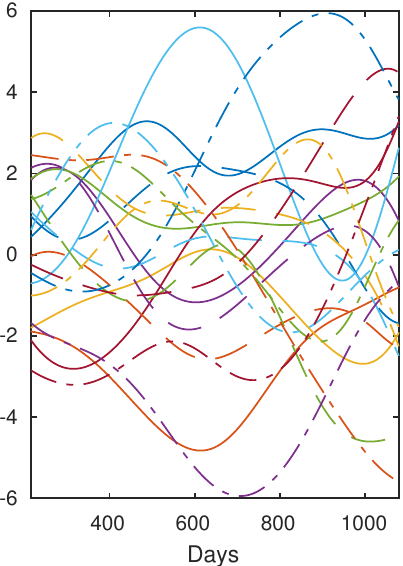}} \\ 
\subcaptionbox{$\hat{R}^2$\label{fig: rsq_est}}[2.45in]{\includegraphics{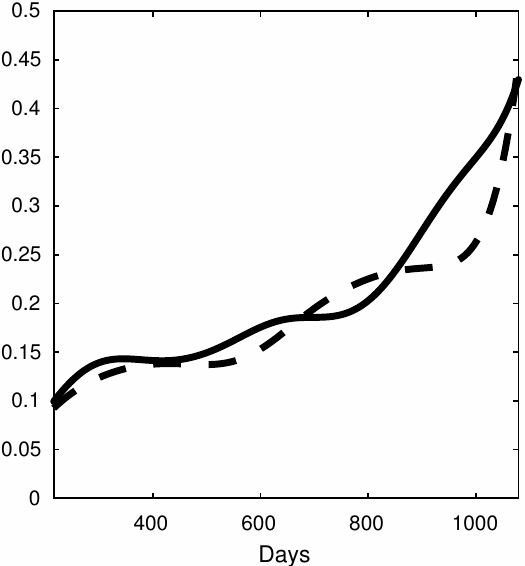}}
\caption{\small Estimates $\hat{\Gamma}(x)$ in (a), $\hat{\beta}_{\hat{\lambda}_{\textrm{opt}}}(x)$ in (b) and $\hat{R}^2(x)$ in (c), corresponding to the varying coefficient  model \eqref{eq: ELC_model} for regressing ELC scores on myelin water fraction levels. In panel (c), estimates based on both one observation per subject (solid line) and the full data set (dashed line) are shown. \label{fig: ELC_est}}
\end{figure}

Another meaningful measure we can extract is an estimate of the time-varying coefficient of determination, or multiple correlation,
$$
R^2(x) = \Gamma(x)^\top \Sigma(x)\inv \Gamma(x),
$$
which we estimate by 
$$
\hat{R}^2(x) = \hat{\Gamma(x)}^\top \left[\hat{\Sigma}(x) + \hat{\lambda}_{\textrm{opt}}I_p\right]\inv \hat{\Gamma(x)},
$$
with the resulting estimates shown  in Figure~\ref{fig: rsq_est}, where also the  corresponding estimates using the complete data set are shown.  One finds that while  MWF levels only moderately account for variability in ELC score at younger ages, there is a sharp and steady increase in their predictive power between the ages of two and three.  

These exploratory results clearly demonstrate the evolving relationships between maturing brain structure and developing cognitive functioning and behaviour, particularly throughout early neurodevelopment.  In older children, adolescents, and adults, neuroimaging results generally conform to the hypothesis that increased brain structure, integrity, and  myelin are associated with improved cognitive function.  Our results suggest that this relationship is less clear and exhibits complex nonlinearities  during infancy and early childhood.  For example, our finding that reduced myelination within portions of the Corpus Callosum is associated with increased cognitive functioning runs contrary to preconceptions derived from adult studies \cp{frye:08}, however, may be a natural consequence of other prior findings suggesting that slower maturation is also associated with improved cognitive outcomes \cp{d142,erus:14}.

With respect to our finding that brain structure and function relationships become increasingly stronger with age, this too may provide insight into the increasing importance of white matter with age, with grey matter structure playing the more important role in earlier life \cp{smys:16}.  However, it is also important to note that early cognitive measures suffer high variability and poor predictive ability prior to 12-18 months of age \cp{slat:97}, so that this finding may also reflect intrinsic measurement variability.  Clearly, in addition to the estimation techniques introduced here, it will be useful in future work to develop inferential methods for testing and confidence sets in order to assess the strength of evidence for these findings.

\section{Simulations}
\label{sec: sims}

The  simulations reported  in this section aim, first, to demonstrate the advantages of local Fr\'echet smoothing over the Nadaraya-Watson smoother and, second, to establish the reliability of estimates obtained from real data in Section~\ref{ss: bambam}.  The simulations were conducted under the setting $T = 1$, with $X_i$ being independently sampled from a beta distribution with shape parameters 0.5 and 1.8, which roughly approximates the shape of the age distribution in the BAMBAM data.   Then, $Y_i|X_i$, $i = 1,\ldots, n$, were generated according to the model
\begin{equation}
\label{eq: sim_model}
Y_i|X_i = \mu(X_i) + \left[\Sigma(X_i)\right]^{1/2}Z_i.
\end{equation}
In order to emulate the BAMBAM data, for a fixed dimension $p$, the mean vector $\mu(x)$ had components $\mu_j(x) = b_j - 8(x - c_j)^2$, $j = 1,\ldots, p$, where \mbox{$b_j\sim \mathcal{U}(10,16)$} and $c_j\sim\mathcal{U}(1,1.1)$ were drawn once and fixed for all simulations.  These parametric mean functions were chosen based on the estimated mean myelin trajectories (see Fig.~\ref{fig: MWF_traj} in Section~\ref{ss: bambam}).  The covariance $\Sigma(x)$ was formed by generating a $p\times p$ matrix $A$ with independent $\mathcal{N}(0, 0.5)$ random variables in each entry, then computing $S = 0.5(A + A^\top).$  A second $p\times p$ matrix $V$ was generated with elements drawn independently as $\mathcal{U}(0, 0.5)$, from which $\theta = 0.5(V + V^\top)$ was computed.  Finally, with $\textrm{Exp}$ denoting matrix exponentiation  and $\odot$ the Hadamard product, we formed
\begin{equation}
\label{eq: sim_cov}
\Sigma(x) = (1 + 10x + 20x^5)\textrm{Exp}\left[S \odot \sin\left(2\pi\theta(x+0.1)\right)\right].
\end{equation}
The polynomial factor emulates the increasing variability of MWF levels seen in the BAMBAM data, with minor fluctuations induced by the sine component. 
Repeated simulations were run by sampling the random variate $Z_i$ independently of $X_i$ as a standard $p$-dimensional Gaussian random vector.  Mean trajectory estimates $\hat{\mu}_j$ were obtained using local linear regression with the bandwidth chosen by ordinary leave-one-out cross validation, pooled across $j$, and separately for each simulation run.  Covariance estimation was performed using both Nadaraya-Watson \eqref{eq: est_NW}, local Fr\'echet \eqref{eq: est_LF}, and standard local linear estimators, which are obtained as in (\ref{eq: kern_smooth}), with $m = 1$ and weights $w_{in,1}$  as specified in 
(\ref{eq: weights}). For the latter,  the various bandwidths were all taken to be equal, i.e.\ $h_{jk} \equiv h.$

To compare the various estimates, for each fixed bandwidth $h$ over a grid, we computed the integrated squared error (ISE)
$$
\text{ISE}(h)=\int_0^1 d_F^2(\hat{\Sigma}(x)(h), \Sigma(x))\; \d x
$$
for each simulation run, where $\hat{\Sigma}(x)(h)$ stands for a generic estimate of $\Sigma(x)$ using bandwidth $h$.  The results for comparing \eqref{eq: est_NW}, \eqref{eq: est_LF}, and \eqref{eq: kern_smooth} using this metric are in Table~\ref{tab: MISE} for 100 simulations, dimensions  $p = 20, 40$, and sample sizes  $n = 250, 500, 1000.$  The settings we investigated include parameter values that approximate the BAMBAM data, and go beyond by increasing both dimension and sample size.  One finds that average ISE values are uniformly lower for the local linear and Fr\'echet methods, and the very small optimal bandwidths used for the Nadaraya-Watson estimator particularly reflect its well-known inherent bias issues, and increasing sample size leads only to very small declines or no declines in ISE, in contrast to the local linear and Fr\'echet methods, where increasing sample sizes lead to noticeable improvements. While the tabulated \emph{integrated} MSE results demonstrate only slight improvement using local Fr\'echet estimation versus ordinary local linear estimation, in fact the \emph{pointwise} discrepancies $d_F(\hat{\Sigma}(x)(h), \Sigma(x))$ are uniformly lower for the local Fr\'echet method, with larger discrepancies near the boundaries where the ordinary local linear estimate tends to suffer from  influential negative eigenvalues.

\begin{table}[t]
\small
  \centering
  \caption{Logarithms of average ISE for Nadaraya-Watson \eqref{eq: est_NW}, ordinary local linear \eqref{eq: kern_smooth}, and local Fr\'echet \eqref{eq: est_LF} estimators, minimized over a grid of bandwidth values $h$.  The minimizing bandwidth  is given in parentheses. \label{tab: MISE}}
  \begin{tabular}{|l|l|c|c|c|}
  \hline
    Dimension  & Method & $n = 250$ & $n  = 500$ & $n = 1000$ \\ \hline
    \multirow{3}{*}{$p=20$} & Nadaraya-Watson & 11.981(0.08) & 11.958(0.06) & 11.937(0.05) \\
    & Local Linear & 10.799(0.43) & 10.430(0.33) & 10.032(0.29) \\
    & Local Fr\'echet & 10.798(0.43) & 10.430(0.33) & 10.031(0.29) \\ \hline
    \multirow{3}{*}{$p=40$} & Nadaraya-Watson & 14.693(0.09)& 14.673(0.07) & 14.655(0.05) \\
    & Local Linear & 13.846(0.60) & 13.457(0.33) & 13.068(0.29) \\
    & Local Fr\'echet & 13.845(0.60)& 13.456(0.37) & 13.068(0.29)\\ \hline
  \end{tabular}
\end{table}

The simulation results are visualized in Figure~\ref{fig: box}, demonstrating the ISE values (in log scale) over all simulation runs rather than just the mean, where we leave out the ordinary local linear estimator.  The local Fr\'echet approach clearly outperforms the Nadaraya-Watson estimator uniformly over all settings, which, again due to its bias,  is seen not to display the improved performance with increasing sample size $n$ that is found for the local Fr\'echet estimator and is expected for a reasonable estimation approach. 

To further validate our analysis in Section~\ref{sec: bambam}, we implemented  two additional estimates of the dynamic covariance for  the setting $n = 250$ and $p = 20.$ The first alternative estimate is the local Fr\'echet estimator under the square root metric $d(C_1,C_2) = d_F(C_1^{1/2},C_2^{1/2})$,
$$
\hat{\Sigma}^d_\oplus(x) = \left(\sum_{i = 1}^n w_{in,1}(x, h) \hat{C}_i^{1/2}\right)^2.
$$
The appeal of this estimate is that it is an intrinsic estimator that does not require projection onto $\mcSp.$  However, as noted in Section~\ref{ss: mfd_methods}, this estimator no longer targets the ordinary covariance, but rather the $d$-covariance of \ci{tava:16} in \eqref{eq: dcov}.  This inherent bias resulted in a worse average ISE of $11.376$ (on the log scale, and minimized over a grid of bandwidths) when compared to local Fr\'echet estimation under the Frobenius metric in Table~\ref{tab: MISE}, although the performance was better than that of the standard Nadaraya-Watson estimate.

\begin{figure}[h]
\begin{center}
\subcaptionbox{$p = 20$}[2.45in]{\includegraphics{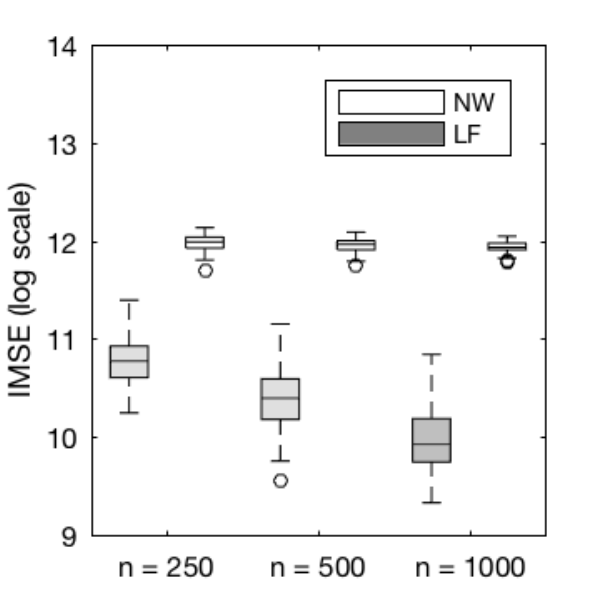}} \subcaptionbox{$p = 40$}[2.45in]{\includegraphics{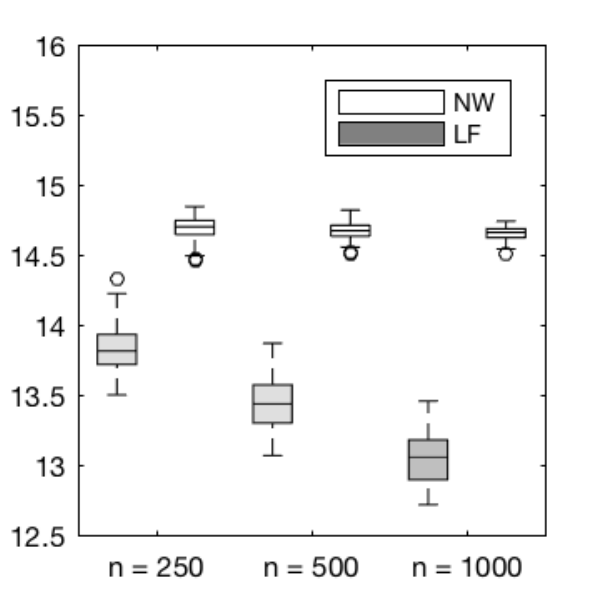}}
\caption{\small Boxplots of integrated squared errors for Nadaraya-Watson (NW) and local Fr\'echet (LF) estimators for $p = 20, 40$ and $n = 250, 500, 1000$, using the optimal bandwidths given in Table~\ref{tab: MISE}. \label{fig: box}}
\end{center}
\end{figure}

A second alternative estimate was considered  in order  to assess the relative efficiency of using just one observation per subject versus the full data set in the BAMBAM analysis. This estimate was obtained by generating repeated observations for the $n = 250$ subjects and then computing the local Fr\'echet estimate using all observations, ignoring dependencies.  To mimic the BAMBAM data, we generated independent obervation counts $N_i \in \{1, 2, 3, 4\}$ with probabilities $(0.48, 0.28, 0.14, 0.1)$, resulting in a total of 497 observations including repeats, and these numbers were used across all simulation runs.  For indices $i$ with $N_i > 1,$ this required an adaptation of the generative model in \eqref{eq: sim_model} in order to include dependencies.  This was done by generating independent timepoints $X_{ij}$, $j = 1,\ldots,N_i,$ according to the beta distribution as before, followed by a zero-mean multivariate normal random vector $(Z_{i1}^\top,\ldots,Z_{iN_i}^\top)^\top$ with $\Var(Z_{ij}) = I_p$ and $\Cov(Z_{ij}, Z_{ij'}) = 0.2I_p,$ $j,j' = 1,\ldots, N_i$, $j' \neq j.$  Lastly, the random vector
$$
Y_{ij}|X_{ij} = \mu(X_{ij}) + \left[\Sigma(X_{ij})\right]^{1/2}Z_{ij}
$$
was computed. Unsurprisingly, the increased information resulted in an improved average ISE of $10.465$ (again in log scale).  Hence, when repeats are available, they can indeed be beneficial.  However, the methodology does not require repeats and, most importantly, the challenges mentioned in Section~\ref{ss: mfd_methods} associated with single observation data are not alleviated for very sparse longitudinal data such as those available in the  BAMBAM study.

\section{Theoretical Justifications}
\label{sec: theory}

We analyze the pointwise behavior of the estimate $\SLF(x)$ and establish the rate of convergence for the metric $d_F(\SLF(x), \Sigma(x)).$  If $\mu$ were known, so that $C_i$ in \eqref{eq: C_known} can be computed, the estimator
\begin{equation}
\label{eq: Sig_known}
\SLFt(x) = \argmin_{C \in \mcSp} \sn w_{in,1}(x, h)d_F^2(C, C_i)
\end{equation}
corresponds to the estimator studied in \ci{mull:18:3}.  Thus, we can make use of the inequality $$d_F(\SLF(x), \Sigma(x)) \leq d_F(\SLF(x), \SLFt(x)) + d_F(\SLFt(x), \Sigma(x)).$$ Suppose that the marginal densities $f_X$ of $X_1$ and $f_U(\cdot, x)$ of $U_1(x)$ exist, the latter for all $x \in [0,T]$.  Let $\lVert \cdot \rVert$ be the standard Euclidean norm on $\mathbb{R}^p,$ and $h = h_n \rightarrow 0$ be a bandwidth sequence.  We require the following assumptions.
\begin{itemize}
\item[(A1)] The kernel function $K$ is a symmetric probability density function with support $[-1,1].$
\item[(A2)] The density $f_X$ of $X_1$  is twice continuously differentiable and \newline $\inf_{x \in [0,T]} f(x) > 0.$
\item[(A3)] There is a constant $M_1 > 0$ such that $\sup_{x \in [0,T]}\lVert U_1(x) \rVert  \leq M_1$ almost surely.  Futhermore, $$\inf_{x\in [0,T], \lVert y \rVert \leq M_1} f_U(y, x) > 0.$$
\item[(A4)] The function $f_U(u, x)$ is continuous with respect to both arguments and twice continuously differentiable with respect to $x$ on the interior.
\item[(A5)]  For any $x \in [0,1]$ and a null  sequence $q_n = o(1),$ the auxiliary estimates $\hat{\mu}_j$ satisfy 
\[
\sup_{|y - x| \leq h} |\hat{\mu}_j(y) - \mu_j(y)| = O_p(q_n).
\]
\end{itemize}

Assumption (A1) is common for estimation of a function with bounded support, but can be relaxed by controlling the tail behavior of $K$.  Assumption (A2) is also widely used in local polynomial estimation settings, while (A3) is a natural assumption which has the implication that the conditional density of $Y_1|X_1$ is well-behaved.  Assumption (A4) can also be relaxed by controlling the pointwise tail behavior of the process $U_1$. The required rate in (A5) is needed to control the error incurred by using $\hat{C}_i$ in place of $C_i$ in the estimation, where the upper bound $q=n^{-2/5}\log(n)$ is known to hold for local linear estimators under certain assumptions, which include that the 
functions $\mu_j(x)$ are twice continuously differentiable and  conditions on the bandwidth \cp{mack:82, fan:96}.

\begin{Theorem}
\label{thm: rate}
Let $h = h_n\rightarrow 0$ and $nh \rightarrow \infty.$  Under assumptions (A1)--(A5), for any interior point $x \in (0,1),$ the local Fr\'echet estimator satisfies 
$$d_F(\Sigma(x), \SLF(x)) = O_p\left(h^2 + (nh)^{-1/2} + q_n\right).$$
\end{Theorem}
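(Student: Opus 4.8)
The plan is to build directly on the triangle-inequality decomposition already displayed, bounding the two pieces $\Fro{\SLF(x)}{\SLFt(x)}$ and $\Fro{\SLFt(x)}{\Sigma(x)}$ separately, and showing that the first contributes the $q_n$ term while the second contributes $h^2 + (nh)^{-1/2}$.

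For the oracle term $\Fro{\SLFt(x)}{\Sigma(x)}$, I would invoke the convergence theory for local Fr\'echet regression in \ci{mull:18:3}, since $\SLFt(x)$ is precisely their estimator with responses $C_i$ and metric $d_F$; the only work is to check that (A1)--(A4) imply their regularity conditions. The bias rate $h^2$ requires that the entries $\Sigma_{jk}(x) = E(C_{i,jk}\mid X_i = x)$ be twice continuously differentiable, which follows by differentiating $\Sigma_{jk}(x) = \int (u_j - \mu_j(x))(u_k - \mu_k(x)) f_U(u,x)\,\d u$ under the integral sign, justified by the twice differentiability of $f_U(\cdot,x)$ in (A4), that of $\mu$, and the bounded integration domain from (A3). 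The variance rate $(nh)^{-1/2}$ needs a bounded conditional variance of the responses, immediate from $\lVert U_1(x)\rVert \le M_1$ in (A3) together with $\inf_x f_X(x) > 0$ from (A2). I would also note that the degeneracy flagged in Remark~\ref{rm: degenerate} causes no analytic difficulty here: because $d_F$ is Euclidean, $\SLFt(x)$ equals the projection $\Pi_{\mcSp}$ of the unconstrained entrywise local linear estimator $\sum_i w_{in,1} C_i$, and since projection onto the closed convex cone $\mcSp$ is nonexpansive and fixes $\Sigma(x) \in \mcSp$, one has the self-contained bound $\Fro{\SLFt(x)}{\Sigma(x)} \le \Fronorm{\sum_i w_{in,1} C_i - \Sigma(x)}$, reducing everything to the standard local linear rate applied to each of the finitely many entries.

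For the mean-estimation term $\Fro{\SLF(x)}{\SLFt(x)}$, I would again use that $\SLF(x)$ and $\SLFt(x)$ are the $\Pi_{\mcSp}$-projections of $\sum_i w_{in,1}\hat C_i$ and $\sum_i w_{in,1} C_i$, so nonexpansiveness gives $\Fro{\SLF(x)}{\SLFt(x)} \le \Fronorm{\sum_i w_{in,1}(\hat C_i - C_i)} \le \sum_i |w_{in,1}|\, \Fronorm{\hat C_i - C_i}$. Writing $e_i = Y_i - \mu(X_i)$ and $\delta_i = \hat{\mu}(X_i) - \mu(X_i)$, the expansion $\hat C_i - C_i = -e_i\delta_i^\top - \delta_i e_i^\top + \delta_i\delta_i^\top$ yields $\Fronorm{\hat C_i - C_i} \le 2\lVert e_i\rVert\,\lVert\delta_i\rVert + \lVert\delta_i\rVert^2$. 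Every $i$ contributing to the sum satisfies $|X_i - x|\le h$ by the compact support of $K$ in (A1); on this set $\lVert e_i\rVert \le 2M_1$ by (A3) and $\lVert\delta_i\rVert = O_p(q_n)$ uniformly by (A5) applied to each of the fixed $p$ coordinates, so $\Fronorm{\hat C_i - C_i} = O_p(q_n)$ uniformly in $i$. Since a standard interior-point analysis of the local linear weights gives $\sum_i |w_{in,1}| = O_p(1)$, the whole term is $O_p(q_n)$.

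I expect the main obstacle to lie in the oracle term, specifically in cleanly verifying the regularity conditions of \ci{mull:18:3} --- in particular the local quadratic (curvature) lower bound on the population Fr\'echet functional near its minimizer together with the entropy control --- in a setting where the sampled objects are degenerate. The self-contained projection argument sketched above is the cleanest route around this, since it sidesteps the abstract conditions entirely and reduces the oracle term to entrywise local linear regression theory \cp{fan:96}; the remaining care is then purely in justifying differentiation under the integral for the bias and in the uniform propagation of the (A5) rate through the quadratic expansion of $\hat C_i - C_i$, both of which are routine given the boundedness in (A3).
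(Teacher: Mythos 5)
Your proposal is correct and follows essentially the same route as the paper: the same triangle-inequality decomposition, the same contraction (nonexpansiveness) argument for $d_F(\SLF(x),\SLFt(x))$ via $\Pi_{\mcSp}$, the same expansion bounding $\Fronorm{\hat C_i - C_i}$ by a constant times $\lVert\hat\mu(X_i)-\mu(X_i)\rVert$ using (A3) and (A5), and the same weight bound (the paper phrases it as $|w_{in,1}(x)| = O_p(K_h(X_i-x))$ uniformly in $i$, which after normalization is your $\sum_i |w_{in,1}| = O_p(1)$ on the kernel support). The one genuine difference is your treatment of the oracle term: the paper simply invokes Theorems 3 and 4 of \ci{mull:18:3}, noting that the only condition failing verbatim is boundedness of $\mcSp$ and that (A3) permits "minor alterations" of those proofs, whereas you additionally propose a self-contained alternative that exploits $\SLFt(x) = \Pi_{\mcSp}\left(\sum_i w_{in,1}C_i\right)$ and nonexpansiveness (with $\Sigma(x)$ a fixed point) to reduce the oracle term to entrywise local linear rates from \ci{fan:96}. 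That alternative is sound and arguably cleaner: it sidesteps the M-estimation/entropy conditions of Fr\'echet regression entirely and makes the degeneracy and unboundedness concerns of Remarks \ref{rm: bias} and \ref{rm: degenerate} moot, at the cost of being special to the Euclidean/Frobenius setting; the paper's route, by contrast, generalizes immediately to non-Euclidean metrics (e.g.\ the $d$-covariances of \ci{tava:16}), which is exactly the extensibility the authors advertise at the end of Section~\ref{ss: cfm}.
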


\begin{Corollary}
\label{cor: rate}
Let $\hat{\mc{R}}^{\textrm{LF}}(x)$ and $\mc{R}(x)$ be the correlation matrices corresponding to $\SLF(x)$ and $\Sigma(x)$, respectively.  Under the assumptions of Theorem~\ref{thm: rate}, for any $x \in (0,1),$  $$d_F(\mc{R}(x), \hat{\mc{R}}^{\textrm{LF}}(t)) = O_p\left(h^2 + (nh)^{-1/2} +  q_n \right).$$
\end{Corollary}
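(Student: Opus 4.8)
The plan is to realize the correlation matrix as a smooth, deterministic transformation of the covariance matrix and then transfer the rate from Theorem~\ref{thm: rate} through that map. Writing $D_S = \mathrm{diag}(S_{11},\ldots,S_{pp})$ for the diagonal part of a matrix $S \in \mcSp$ with strictly positive diagonal, the normalization map is
\[
\Phi(S) = D_S^{-1/2} S D_S^{-1/2}, \qquad \Phi(\Sigma(x)) = \mc{R}(x), \quad \Phi(\SLF(x)) = \hat{\mc{R}}^{\textrm{LF}}(x),
\]
where the last identity holds on the event that the diagonal entries of $\SLF(x)$ are positive. Since the entrywise formula $\Phi(S)_{jk} = S_{jk}/\sqrt{S_{jj}S_{kk}}$ is continuously differentiable in $S$ on the open set where all diagonal entries are positive, $\Phi$ is locally Lipschitz with respect to $d_F$ there, and the whole argument reduces to controlling the Lipschitz constant of $\Phi$ on a region containing both $\Sigma(x)$ and $\SLF(x)$ with high probability.

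First I would fix the interior point $x$ and use that $\Sigma(x)$ lies in the interior of $\mcSp$ (Remark~\ref{rm: degenerate}), so its diagonal entries are bounded below by some $\delta = \delta(x) > 0$; assumption (A3) bounds $\lVert U_1(x) \rVert \le M_1$ almost surely, hence $|\Sigma_{jk}(x)| \le M_1^2$ for all $j,k$. On the region
\[
\mathcal{N}_\delta = \{\, S \in \mcSp : S_{jj} \ge \delta/2,\ |S_{jk}| \le 2M_1^2 \ \text{for all } j,k \,\},
\]
the representative partial derivatives $\partial \Phi_{jk}/\partial S_{jk} = (S_{jj}S_{kk})^{-1/2}$ and $\partial \Phi_{jk}/\partial S_{jj} = -\tfrac{1}{2} S_{jk} S_{jj}^{-3/2} S_{kk}^{-1/2}$ (the diagonal entries of $\Phi$ are constantly $1$ and contribute nothing) are uniformly bounded by a constant $L = L(\delta, M_1, p)$. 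A mean value inequality then yields $d_F(\Phi(S_1), \Phi(S_2)) \le L\, d_F(S_1, S_2)$ for all $S_1, S_2 \in \mathcal{N}_\delta$.

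Then I would combine this with Theorem~\ref{thm: rate}. Setting $r_n = h^2 + (nh)^{-1/2} + q_n$, the theorem gives $d_F(\SLF(x), \Sigma(x)) = O_p(r_n)$; since $d_F$ dominates the entrywise discrepancies and $r_n \to 0$, the entries of $\SLF(x)$ converge in probability to those of $\Sigma(x)$, so $\SLF(x) \in \mathcal{N}_\delta$ with probability tending to one. On that event the Lipschitz bound gives
\[
d_F\bigl(\mc{R}(x), \hat{\mc{R}}^{\textrm{LF}}(x)\bigr) = d_F\bigl(\Phi(\Sigma(x)), \Phi(\SLF(x))\bigr) \le L\, d_F\bigl(\Sigma(x), \SLF(x)\bigr) = O_p(r_n),
\]
and since the event has probability tending to one, the claimed $O_p$ rate follows. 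The main obstacle is precisely the boundary behavior of $\Phi$: the constant $L$ blows up as any diagonal entry approaches zero, so the argument genuinely relies on the interior assumption for $\Sigma(x)$ together with the consistency of the estimated variances to localize $\SLF(x)$ into the good region $\mathcal{N}_\delta$. Once this localization is secured, the remainder is a routine application of the Lipschitz estimate and Theorem~\ref{thm: rate}.
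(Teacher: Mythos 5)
Your proof is correct. The paper itself omits the proof of this corollary, stating only that it is straightforward, and your argument is exactly the intended routine one: realize the correlation matrix as the image of the covariance under the normalization map $\Phi(S)=D_S^{-1/2}SD_S^{-1/2}$, which is locally Lipschitz with respect to $d_F$ on the convex region where diagonal entries are bounded away from zero and entries are bounded above, then transfer the rate from Theorem~\ref{thm: rate}; you also correctly isolate the only step requiring care, namely that the diagonal lower bound holds for $\Sigma(x)$ (via (A3), which forces $\Sigma(x)$ to be positive definite) and carries over to $\SLF(x)$ on an event of probability tending to one, which suffices for the $O_p$ conclusion.
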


From the theorem and corollary, the bandwidth choice which leads to the fastest rate of convergence is $h=h(n) \sim n^{-1/5}$ and the resulting $O_p$ rate is $n^{-2/5} + q_n$. The first term corresponds to the optimal rate of pointwise convergence for nonparametric estimation of regression functions that are twice continuously differentiable. A proof of Theorem~\ref{thm: rate} is provided below, while that of the corollary is straightforward and is omitted.

\begin{proof}[Proof of Theorem~\ref{thm: rate}]
First, assumptions (A1)--(A4) can be used to verify the conditions necessary to invoke Theorems 3 and 4 of \ci{mull:18:3}, with the exception that the space $\mathcal{S}_p$ is not bounded.  With minor alterations to the proofs, assumption (A3) implies that these continue to hold, since $\mathcal{S}_p$ is a subset of a Euclidean space, implying  $$d_F(\Sigma(x), \tilde{\Sigma}^{LF}(x)) = O_p(h^2 + (nh)^{-1/2}).$$  By the contraction property of the projection $\Pi_{\mc{S}_p}$, it follows that
$$
d_F(\tilde{\Sigma}^{LF}(x),\hat{\Sigma}^{LF}(x)) \leq \frac{1}{n}\left\lVert \sum_{i = 1}^n w_{in,1}(x)(C_i - \hat{C}_i)\right\rVert_F, 
$$
where $\lVert \cdot \rVert_F$ is the Frobenius norm. Assumption (A5) also implies that, for large enough $n$, we have
$$
\lVert C_i - \hat{C}_i\rVert_F \leq 4M_1\lVert \hat{\mu}(X_i) - \mu(X_i) \rVert.
$$
Under (A1), it can easily be shown that $|w_{in,1}(x)| = O_p(K_h(X_i - x)),$ where the $O_p$ term is uniform over $i$.  Hence, using (A5), we have
\begin{align*}
d_F(\tilde{\Sigma}^{LF}(x),\hat{\Sigma}^{LF}(x)) &\leq \frac{1}{n}\sum_{i = 1}^n |w_{in,1}(x)|\left\lVert C_i - \hat{C}_i\right\rVert_F \\
&= O_p\left(\frac{q_n}{n}\sum_{i = 1} K_h(X_i - x)\right) = O_p(q_n).
\end{align*}

\end{proof}

\end{document}